\documentclass[english,journal]{extarticle}
\usepackage{fontenc}
\usepackage[utf8]{inputenc}
\usepackage{color}
\usepackage{subfigure}
\usepackage{graphicx}
\usepackage{babel}
\usepackage{float}
\usepackage{amsthm}
\usepackage{amsmath}
\usepackage{amssymb}
\usepackage{tikz}
\usepackage{fixltx2e}
\usepackage{multirow}
\usepackage{anysize}
\usepackage{parskip}
\usepackage{xargs}[2008/03/08]
\usepackage[unicode=true,pdfusetitle,
 bookmarks=true,bookmarksnumbered=false,bookmarksopen=false,
 breaklinks=false,pdfborder={0 0 0},backref=false,colorlinks=true]
 {hyperref}
\marginsize{1in}{1in}{1in}{1in}

\makeatletter


\floatstyle{ruled}
\newfloat{algorithm}{tbp}{loa}
\providecommand{\algorithmname}{Algorithm}
\floatname{algorithm}{\protect\algorithmname}

  \theoremstyle{plain}
  \newtheorem{thm}{\protect\theoremname}
  \providecommand{\cnjname}{Conjecture}
  \theoremstyle{plain}
  \newtheorem{cnj}{\protect\cnjname}
  \theoremstyle{plain}
  \newtheorem{lem}{\protect\lemmaname}
  \theoremstyle{definition}
  
  \theoremstyle{plain}
  \newtheorem{cor}{\protect\corollaryname}
 \ifx\proof\undefined\
   \newenvironment{proof}[1][\proofname]{\par
     \normalfont\topsep6\p@\@plus6\p@\relax
     \trivlist
     \itemindent\parindent
     \item[\hskip\labelsep
           \scshape
       #1]\ignorespaces
   }{%
     \endtrivlist\@endpefalse
   }
   \providecommand{\proofname}{Proof}
 \fi
  \theoremstyle{plain}

\usepackage{algpseudocode}
\usepackage{hhline}
\usepackage{array,booktabs}

\AtBeginDocument{
  
}

\makeatother

\providecommand{\corollaryname}{\inputencoding{latin9}Corollary}
\providecommand{\definitionname}{\inputencoding{latin9}Definition}
\providecommand{\lemmaname}{\inputencoding{latin9}Lemma}
\providecommand{\propositionname}{\inputencoding{latin9}Proposition}
\providecommand{\theoremname}{\inputencoding{latin9}Theorem}

\newcommand{\mathref}[2]{\hyperref[msym:#1]{#2}}


%
%
%


\usepackage{xy}
\xyoption{matrix}
\xyoption{frame}
\xyoption{arrow}
\xyoption{arc}

\usepackage{ifpdf}
\ifpdf
\else
\PackageWarningNoLine{Qcircuit}{Qcircuit is loading in Postscript mode.  The Xy-pic options ps and dvips will be loaded.  If you wish to use other Postscript drivers for Xy-pic, you must modify the code in Qcircuit.tex}
\xyoption{ps}
\xyoption{dvips}
\fi

\entrymodifiers={!C\entrybox}

\newcommand{\ket}[1]{{\left\vert{#1}\right\rangle}}
\newcommand{\qw}[1][-1]{\ar @{-} [0,#1]}
\newcommand{\qwx}[1][-1]{\ar @{-} [#1,0]}









\newcommand{\control}{*!<0em,.025em>-=-<.2em>{\bullet}}

\newcommand{\ctrl}[1]{\control \qwx[#1] \qw}

\newcommand{\targ}{*+<.02em,.02em>{\xy ="i","i"-<.39em,0em>;"i"+<.39em,0em> **\dir{-}, "i"-<0em,.39em>;"i"+<0em,.39em> **\dir{-},"i"*\xycircle<.4em>{} \endxy} \qw}
\newcommand{\lstick}[1]{*!R!<.5em,0em>=<0em>{#1}}


\newcommand{\Qcircuit}{\xymatrix @*=<0em>}



\begin{document}

\title{Optimal and asymptotically optimal NCT reversible circuits by the gate types}

\author{\small{ Dmitri Maslov$^{1,2}$} \\
{\small\it $^1$ National Science Foundation, Arlington, VA, USA} \\
{\small\it $^2$ QuICS, University of Maryland, College Park, MD, USA} \\
{\small\tt \href{mailto:dmitri.maslov@gmail.com}{dmitri.maslov@gmail.com}}\\
}

\maketitle

\begin{abstract}
We report optimal and asymptotically optimal reversible circuits composed of NOT, CNOT, and Toffoli (NCT) gates, keeping the count by the subsets of the gate types used.  This study fine tunes the circuit complexity figures for the realization of reversible functions via reversible NCT circuits.  An important consequence is a result on the limitation of the use of the $T$-count quantum circuit metric popular in applications. 
\end{abstract}

\section{Introduction} 
Reversible circuits are important parts of quantum algorithms.  Grover's oracles, integer and finite field arithmetic operations (used in Shor-type discrete logarithm quantum algorithms), as well as numerous types of Boolean operations over quantum registers are all examples of the reversible circuits.  Consequently, the study of reversible circuits and their complexities is important in understanding the complexity of quantum circuits and algorithms, as well as for the efficient implementation of quantum algorithms.  
 
In this paper, we study reversible circuits over the gate library consisting of the NOT, the CNOT, and the Toffoli gates, also commonly referred to as NCT circuits.  The individual gates are defined via the logical transformations they perform over Boolean variables, as follows:
\begin{itemize}
\item NOT gate, $\textsc{NOT}(a): a \mapsto a \oplus 1$;
\item CNOT gate, $\textsc{CNOT}(a;b): (a,b) \mapsto (a, b \oplus a)$;
\item Toffoli gate, $\textsc{TOF}(a,b;c): (a,b,c) \mapsto (a,b,c \oplus ab)$.
\end{itemize}
A reversible circuit is the string of gates, read left to right.  In addition to the primary variable inputs, a reversible circuit may have constant inputs, carrying a constant value of either a zero or a one.  Those additional inputs are called ancillae.  They can be a useful resource, as they provide additional space for the computations. 

A reversible function of $n$ Boolean variables, $f(x)=f(x_1,x_2,...,x_n)=(f_1(x_1,x_2,...,x_n),$ \linebreak $f_2(x_1,x_2,...,x_n),...,f_n(x_1,x_2,...,x_n))$ is the bijective mapping of the $n$-dimensional Boolean cube into itself.  There are three possible notions of what it means to implement a reversible function by a reversible circuit, that we list next, appearing in the weakest to the strongest form.
\begin{itemize}
\item[W.] Weak. The reversible circuit computes a set of Boolean functions, and among them, are all $n$ outputs of the desired reversible function $f$.
\item[I.] Intermediate.  The reversible circuit computes the mapping $(x,y) \mapsto (x,y\oplus f(x))$, where $x$, $y$, and $f(x)$ are $n$-bit registers, and the EXOR operates bitwise.  In addition, some ancillae may be used, but their values are returned to the original state.
\item[S.] Strong. The circuit implementing $f$ performs the mapping $x \mapsto f(x)$. Some ancillae may be used, but their values are returned to the original state. 
\end{itemize} 

Stronger notions of the implementability can be used straightforwardly to compute the weaker notions; some CNOTs may be required.  Weaker notions can too be used to compute the stronger notions.  To construct the intermediate implementation using the weak implementation, apply the weak circuit $A$, EXOR useful outputs, $f(x)$, to the new register $y$ via the use of $n$ CNOTs, and then apply $A^{-1}$.  Recall that the circuit $A^{-1}$ may be obtained from the NCT circuit $A$ via inverting the order of gates in $A$.  As a result, the intermediate implementation can be constructed with at most twice the number of gates in the weak implementation, plus $n$ CNOT gates.  Incidentally, same procedure can be applied to the strong implementation to obtain the intermediate implementation from it. 

To obtain the strong implementation from the intermediate implementation, take two circuits---circuit $B$ computing $(x,y) \mapsto (x,y\oplus f(x))$ and circuit $C$ computing $(x,y) \mapsto (x,y\oplus f^{-1}(x))$.  To obtain the transformation $x \mapsto f(x)$ start with the $2n$-bit register $(x,0)$, apply $B$ to it to transform it into $(x,f(x))$, then SWAP first and second $n$-bit registers to obtain $(f(x),x)$, and finally apply $C$ to obtain $(f(x), x \oplus f^{-1}(f(x))) = (f(x), x \oplus x) = (f(x),0)$.  Discarding the mention of ancillae, the aggregate transformation can now be described as $x \mapsto f(x)$.  A further in-depth study of the relation between the intermediate and strong forms can be found in \cite{ar:kkvb}.

Neither of the above constructions affects asymptotics in the case when we are concerned with upper bounds on the resources required to obtain the most difficult function.  Indeed, if the upper bound on the resource count used by $B$, performing the mapping $(x,y) \mapsto (x,y\oplus f(x))$, is $I(n)$, same number, $I(n)$, applies to upper bound the cost of $C$, performing the mapping $(x,y) \mapsto (x,y\oplus f^{-1}(x))$.  Next we summarize how the following notions are related: $W(n)$ the cost of the weak implementation of an arbitrary $n$-bit reversible function, $I(n)$ the cost of the intermediate implementation of an arbitrary $n$-bit reversible function, and $S(n)$ the cost of the strong implementation of an arbitrary $n$-bit reversible function:
\begin{itemize}
\item[WI:] $W(n) \leq I(n)$;
\item[IS:] $I(n) \leq 2\cdot S(n) + n\cdot Cost(\textsc{CNOT})$;
\item[WS:] $W(n) \leq S(n)$;
\item[IW:] $I(n) \leq 2\cdot W(n) + n\cdot Cost(\textsc{CNOT})$;
\item[SI:] $S(n) \leq 2\cdot I(n) + 3n\cdot Cost(\textsc{CNOT})$;
\item[SW:] $S(n) \leq 4\cdot W(n) + 5n\cdot Cost(\textsc{CNOT})$. 
\end{itemize}
In the above, we relied on the common notion that the SWAP gate can be implemented via three CNOTs.  We conclude that in the case when $W(n)$, $I(n)$, and $S(n)$ are at least linear in $n$, asymptotic optimality of any one of them implies the asymptotic optimality of all other types of implementations. 

The reason to have multiple definitions of computability is rooted in the observation that the weak notion would be the one expected in the scenario when reversible circuits are interesting in the context of their own.  However, reversible circuits are most often viewed in the broader context of quantum computing \cite{bk:nc}.  From the point of view of quantum computations, the weak notion of computability by a reversible circuit may give rise to the unwanted entanglement residing on those (qu)bits carrying partial results of the computation.  The intermediate notion removes the concern of the unwanted entanglement residing on the partially computed outputs, and furthermore is used broadly within the context of quantum algorithms.  Should the strong notion of computability be required, it is possible to obtain it too, without affecting the asymptotic optimality.  Therefore, from the point of view of this paper, we will be satisfied with any one type of implementation. 

Define $L_{a,b,c}(n,g)$, where $a,b,c \in \{0,1\}$, and $g$ is a positive integer, to be the smallest cost of the circuit implementation of the most expensive reversible function of $n$ variables realized by the NCT circuit using at most $g$ input constants.  The input constants are allowed to take values $0$ or $1$.  The circuit cost is calculated as the sum across all gates participating in the given circuit, where the NOT gate is counted with the weight $a$, the CNOT gate is counted with the weight $b$, and the Toffoli gate is counted with the weight $c$.  We note that $L_{a,b,c}(n,0)$ does not exist, as no odd permutation may be synthesized via an NCT circuit without using an additional ancilla \cite{ar:spmh}; this explains why we chose $g>0$ in the definition of $L_{a,b,c}(n,g)$.  $L_{a,b,c}(n)$ furthermore reports the best cost NCT implementation of the function that is most difficult to obtain via its circuit realization in the scenario where the use of an arbitrary number of ancillae is allowed. 

Of the 8 possible choices for parameters $a,b,c$ in $L_{a,b,c}(n,g)$, some carry a special meaning.  For instance, $L_{1,0,0}(n,g)$ determines the maximal number of the NOT gates required by reversible circuits.  We will later show that this number is zero, meaning NOT gates by themselves are not very helpful, as their use can be avoided.  $L_{0,0,1}(n,g)=L_{0,0,1}(n)$, when $g$ is allowed to be arbitrarily high, such as to not limit the space used by the computation, can be viewed as the multiplicative, or otherwise, non-linear cost of the reversible functions.  It is furthermore closely related to the $T$-count circuit metric in quantum circuits, as both ignore the effects of the cost of linear (with respect to EXOR) reversible transformations.  The study of lower and upper bounds on $L_{0,0,1}(n)$, as well as the discussion of the implications, is the main focus of this paper. 

\subsection{Motivation}
In this paper we study the problem of minimizing the gate counts by their type in reversible circuits with NOT, CNOT, and Toffoli gates.  This study is motivated by the relative hardness of constructing the Toffoli gate compared to the effort required to obtain the CNOT, and the relative hardness of CNOT compared to NOT.  

First, compare the implementation costs of NOT and CNOT.  Both are Clifford gates, therefore, on the logical level they are likely to be transversal.  This means that both gates are implemented via a set of NOT, and, respectively, CNOT gates, applied to the physical-level qubits.  On the physical level, a NOT gate is often less expensive than the CNOT gate.  Frequently, this is due to the two-qubit gates taking more effort to implement than any of the single-qubit gates.  It is not uncommon for a CNOT gate to be $20$ times more resource demanding compared to the NOT gate. 


Next, compare the cost of the Toffoli gate to the cost of the CNOT gate.  Consider quantum logical-level circuits.  Often, all Clifford gates, CNOT included, are relatively easy to implement on the logical level.  In contrast, the Toffoli gate, being a transformation outside the Clifford group, is more difficult to obtain.  Assuming the non-Clifford gate provided by the fault tolerance approach selected is the so-called $T$ gate, the Toffoli gate may be implemented as a circuit with 2 Hadamard gates, 7 $T/T^\dagger$ gates, and 6 CNOT gates \cite{ar:ammr}.  Discarding the cost of the Hadamard gate, and taking the sum of the costs of the remaining gates in this circuit implementation, we obtain $Cost(\textsc{TOF}) = 6\cdot Cost(\textsc{CNOT}) + 7\cdot Cost(T/T^\dagger)$.  While it was shown that 6 CNOT gates are required to implement the Toffoli gate as a circuit over the library including arbitrary single-qubit and CNOT gates \cite{ar:sm}, the number 7 of $T/T^{\dagger}$ gates has been obtained via a computer search \cite{ar:ammr} and in principle could be reducible (and, in fact, it is when additional resources are available \cite{ar:j}).  The known way of implementing the fault-tolerant gate $T$ requires state distillation and then its teleportation.  The teleportation is achieved via the use of the single logical CNOT gate, relying on the well-known teleportation circuit, and therefore it is not resource demanding.  The state distillation relies on a nested application of the 15-qubit Hamming code \cite{ar:bk}.  Physical parameters of the quantum information system used and the overall length of the desired computation play a determining role in deciding on the details of the protocol and the complexity of implementing the $T$ gate.  Assuming the distillation depth of 2, which appears to be a practical choice for scalable computations, the cost of implementing the $T$ gate is roughly 50 times that of the logical CNOT.  At which point, the cost of the Toffoli gate expressed in the units corresponding to the cost of the CNOT gates becomes roughly $6\cdot 1+7\cdot 50=356$, being, in practical terms, a large number.  While the number $356$ may itself be possible to reduce ({\em e.g.}, outsource the ancilla production to before the desired computation), it is likely that the Toffoli gate will remain substantially (provably, at least 6 times over arbitrary single-qubit gates and the CNOT \cite{ar:sm}) more expensive than the (nearest-neighbour) CNOT gate.  

One other resource that can be useful for efficient implementation of reversible functions is ancillary (qu)bits.  It may be difficult to compare the cost of arbitrary gates to that of ancillary qubits directly, as these are, strictly speaking, resources of a different kind.  However, we will next evaluate the relation between the cost of a logical ancilla qubit and that of logical NOT/Toffoli gates to conclude that, within known fault tolerant approaches, the ancilla qubit is substantially more expensive than the NOT gate, and substantially less expensive than the Toffoli gate.  First, to obtain a logical ancilla qubit in the state $\ket{0}$ or $\ket{1}$ we need a physical space (qubits).  This physical space needs to be prepared in the respective encoded logical state.  The procedure accomplishing it can be described as a (physical-level) Clifford circuit.  As a result, one may expect a number of the physical-level CNOT gates to be applied.  Thus, recalling earlier discussions, it can be expected that the cost of this operation exceeds that of the logical NOT gate.  Second, the Toffoli gate relies on the seven $T/T^\dagger$ gates that themselves are obtained with the use of state distillation, which includes a nested application of the circuits implementing a Clifford unitary.  Logical $\ket{0}$ or $\ket{1}$ state preparation, on the other hand, uses only one Clifford circuit designed to prepare a state, as opposed to a whole unitary, which is expected to be much simpler to accomplish. 

The cost of an ancilla can thus be said to be roughly similar to that of the CNOT gate.  As such, we will disregard the cost of ancilla every time we exclude the cost of the CNOT gates from the circuit cost figure ($L_{a,b,c}(n,g)$ with $b=0$).  In the scenario when we include the CNOT count in the overall calculation, we may consider accounting for ancilla as well.  There are only two interesting cases to consider, $L_{1,1,1}(n,g)$ and $L_{0,1,1}(n,g)$.  This is because, as shown later, the remaining two relevant complexity figures, $L_{0,1,0}(n,g)=0$ when $g \geq 1$, and $L_{1,1,0}(n,1)=1$ and $L_{1,1,0}(n,g)=0$ when $g \geq 2$ carry small values, and there is little interplay between the value of the complexity function $L$ and the number of ancillae.  The case $L_{0,1,1}(n,g)$ may furthermore be reduced to $L_{1,1,1}(n,g)$.  Indeed, constructing the lower bound involves the counting argument, and the asymptotics of the number of transformations achievable by cost-1 circuits over the library with ``free'' NOT gates is the same as the asymptotics of the number of transformations achievable by cost-1 circuits when the NOT gates are counted towards the cost figure.  This means that essentially the same lower bounds will apply to both $L_{1,1,1}(n,g)$ and $L_{0,1,1}(n,g)$.  An upper bound for the quantity $L_{1,1,1}(n,g)$ can furthermore be used directly to upper bound the quantity $L_{0,1,1}(n,g)$.  As a result, of the two quantities, $L_{1,1,1}(n,g)$ and $L_{0,1,1}(n,g)$, only one, $L_{1,1,1}(n,g)$, may be studied, with the results transferable to $L_{0,1,1}(n,g)$.  

If we allow arbitrary ancillae, a classical Boolean circuit complexity result stating that any Boolean function can be implemented using at most $O(\frac{2^n}{n})$ NOT/OR/AND classical gates \cite{ar:l} can be used to upper bound the number of NOT/OR/AND gates in a classical irreversible circuit by $O(2^n)$ for every reversible function.  Making this latter classical circuit reversible may increase the number of gates used by some constant factor, and will not require more than $O(2^n)$ ancillae.  This leads to the upper bound of the form $L_{1,1,1}(n,C_22^n) \lesssim C_32^n$ for a proper choice of constants $C_2$ and $C_3$.  We can furthermore lower bound $L_{1,1,1}(n,C_22^n)$ by the quantity $C_12^n$, for a proper choice of the constant $C_1$, via applying the simple counting argument, \cite[Lemma 8]{ar:spmh}, to obtain asymptotic optimality, $C_12^n \lesssim L_{1,1,1}(n,C_22^n) \lesssim C_32^n$.  It is interesting to study how the value of $L_{1,1,1}(n,g)$ function changes when the number of ancillae $g$ is increased from 1 to $C_22^n$, however, such a study is outside the scope of this paper.  For the rest of the paper, we will restrict the number of ancillae to a constant when considering the values of the $L$ function with the CNOT gate count included.

\subsection{Previous work}
The topic of the complexity of NCT realizations of reversible functions has been studied extensively.  Within the terminology introduced above, previous literature encounters the following results. \cite{ar:spmh} reports upper and lower bounds of the following form: $\frac{n2^n}{3\log_2{n}} \lesssim L_{1,1,1}(n,1) \lesssim 9n2^n$, $L_{0,1,1}(n,1) \lesssim 9n2^n$, and $L_{0,0,1}(n,1) \lesssim 9n2^n$.  The upper bounds were improved to $L_{1,1,1}(n,1) \lesssim 5n2^n$, $L_{0,1,1}(n,1) \lesssim 5n2^n$, and $L_{0,0,1}(n,1) \lesssim 3n2^n$ in \cite{ar:mmd}, and then to $L_{1,1,1}(n,1) \lesssim 4.5n2^n$, $L_{0,1,1}(n,1) \lesssim 4n2^n$, and $L_{0,0,1}(n,1) \lesssim 2n2^n$ in \cite{pc:s, ar:szss}.  Finally, \cite{arXiv:1504.06876} reports improved upper bounds of the following form: $L_{1,1,1}(n,1) \lesssim \frac{48n2^n}{\log_2{n}}$, $L_{0,1,1}(n,1) \lesssim \frac{40n2^n}{\log_2{n}}$, and $L_{0,0,1}(n,1) \lesssim \frac{32n2^n}{\log_2{n}}$.  Technically, to apply the result \cite{arXiv:1504.06876}, that was itself developed to handle even permutations, we need to mention that any odd permutation can be reduced to an even permutation via multiplying it by the maximal size multiple control Toffoli gate (one spanning all $n$ bits).  This latter multiple control Toffoli gate requires a linear number of 3-bit Toffoli gates to be implemented as an NCT circuit \cite{ar:bbcd}, and therefore does not affect the advertised asymptotics.  The multiplicative constant separating the best known lower \cite{ar:spmh} and upper \cite{arXiv:1504.06876} bounds for $L_{1,1,1}(n,1)$ is about a hundred, however, it is no more a function of $n$, allowing to conclude that the asymptotic optimality has been established. 

The topic of gate-specific Boolean circuit complexities has also been studied in the literature.  Specifically, \cite{ar:n} studied the complexity of formulas, circuits, and contact relays, implementing a Boolean single-output function, including the multiplication cost (in the basis with Boolean multiplication and addition modulo-2, where the addition can be used for free) of Boolean single-output functions.  The upper bound developed in \cite{ar:n} on the number of multiplications required, $2^{n/2}+o(2^{n/2})$, may be applied to the reversible case to obtain the upper bound of $n2^{n/2}$, in the leading order, of the Toffoli gates per a reversible function.  However, in this paper, we are able to show a better upper bound of $\frac{3}{\sqrt{2}}\sqrt{n}2^{n/2}$ in the leading order via a direct construction.  The difference between classical and reversible logic cases is furthermore not only in the different number of outputs ($n$ VS 1, assuming $n$ input bits) that need to be computed by the reversible circuit simultaneously, but also in the ancillae management, that is of no importance in classical circuits.  These differences are substantial in that they appear to prohibit a direct transfer of the results from standard Boolean logic to the reversible function/circuit scenario.

\section{$L_{a,b,c}(n,g)$: practice-motivated and other cases}
We have previously established that $Cost(\textsc{NOT}) < Cost(\textsc{CNOT}) < Cost(\textsc{TOF})$.  Therefore, there is most value in studying circuit complexities by the gate types in the scenario discarding the costs of the simpler resources first.  These are the cases of $L_{0,1,1}(n,g)$ and $L_{0,0,1}(n)$.  Recall that asymptotically optimal lower and upper bounds in the scenario $L_{1,1,1}(n,1)$ have already been obtained by the previous authors.  We study $L_{0,1,1}(n,g)$ and $L_{0,0,1}(n)$ in the following sections.  In this section, we study optimal and asymptotically optimal circuits in the remaining four scenarios ($L_{0,0,0}(n,g)$ is trivial), $L_{1,0,0}(n,g)$, $L_{1,1,0}(n,g)$, $L_{0,1,0}(n,g)$, and $L_{1,0,1}(n,g)$. 

Consider $L_{1,0,0}(n,g)$, counting the number of NOT gates in reversible NCT circuits.  It may be easily established that $\forall g$ $L_{1,0,0}(n,g)=0$.  Indeed, assign the value of $1$ to the ancillary qubit $y$.  Every time a $\textsc{NOT}(x_i)$ gate is used by the synthesis algorithm reported in, {\em e.g.}, \cite{co:mmd} (a \textsc{NOT} gate is applied at most once to each primary input in the beginning of the circuit), replace it with the $\textsc{CNOT}(y;x_i)$.  The result of this modification of the synthesis algorithm in \cite{co:mmd} is no NOT gates are used, and therefore $L_{1,0,0}(n,g)=0$.  Similarly, if one were to discard the cost of the Toffoli gates, and study the NOT/CNOT cost, the following statement is true: $\forall g \geq 2$ $L_{1,1,0}(n,g)=0$.  The case of $L_{1,1,0}(n,1)$ is somewhat more complex and requires an explicit proof. 

\vspace{2mm}\begin{lem}
$L_{1,1,0}(n,1)=1.$
\end{lem}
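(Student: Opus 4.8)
The plan is to recast the statement as a question about permutation groups acting on $\{0,1\}^n$ and to split the equality into the two matching bounds $L_{1,1,0}(n,1)\ge 1$ and $L_{1,1,0}(n,1)\le 1$. Throughout I fix the single constant input (ancilla) $y$ to the value $1$ and call a circuit \emph{free} if it uses Toffoli gates only, these being the gates of weight $0$. The key observation is that free circuits are more powerful than they first appear: with $y=1$ held fixed, the gate $\textsc{TOF}(y,x_i;x_j)$ realizes the action of $\textsc{CNOT}(x_i;x_j)$ at zero cost. Hence the free circuits realize on the primary bits exactly the subgroup $F=\langle \textsc{CNOT},\textsc{TOF}\rangle \le \mathrm{Sym}(\{0,1\}^n)$ generated by all \textsc{CNOT} and \textsc{TOF} actions (assisted, where convenient, by the one ancilla), and the whole proof reduces to identifying this subgroup $F$.

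For the lower bound, observe that a Toffoli gate fires only when two of its three wires carry the value $1$. Starting from a primary input of Hamming weight $0$ together with $y=1$, a second $1$ is never present, so no Toffoli ever fires and the all-zero primary input $\mathbf 0$ is a fixed point of every free circuit. Consequently any reversible $f$ with $f(\mathbf 0)\ne \mathbf 0$ — for example $f=\textsc{NOT}(x_1)$, which is itself realizable by a single gate — cannot be produced for free and therefore costs at least $1$. This gives $L_{1,1,0}(n,1)\ge 1$.

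For the upper bound I will establish that $F=\mathrm{Stab}(\mathbf 0)$, the full group of permutations of $\{0,1\}^n$ fixing $\mathbf 0$. Granting this, the bound follows from a one-line double-coset computation. If $f(\mathbf 0)=\mathbf 0$ then $f\in F$ and $f$ is free. Otherwise let $\tau=\textsc{NOT}(x_1)$ be the single paid gate and choose $P_1\in\mathrm{Stab}(\mathbf 0)$ with $P_1(e_1)=f(\mathbf 0)$, where $e_1$ is the weight-one string with a $1$ in position $1$; such $P_1$ exists because $\mathrm{Stab}(\mathbf 0)$ acts as the full symmetric group on the nonzero strings. Then $P_0:=\tau P_1^{-1} f$ satisfies $P_0(\mathbf 0)=\tau(P_1^{-1}(f(\mathbf 0)))=\tau(e_1)=\mathbf 0$, so $P_0\in\mathrm{Stab}(\mathbf 0)=F$, and $P_1\,\tau\,P_0=f$ using $\tau^2=\mathrm{id}$. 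Thus every $f$ is realized by two free blocks surrounding one \textsc{NOT}, at total cost $1$, which yields $L_{1,1,0}(n,1)\le 1$ and hence the claimed equality.

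The main obstacle is therefore the generation claim $F=\mathrm{Stab}(\mathbf 0)$. For $n\le 3$ this is direct: the \textsc{CNOT}s already exhaust $\mathrm{Stab}(\mathbf 0)$ for $n\le 2$ (where $\mathrm{GL}(n,2)=\mathrm{Stab}(\mathbf 0)$), while for $n=3$ the group $\mathrm{GL}(3,2)$ acts $2$-transitively, hence primitively, on the seven nonzero strings and a single Toffoli contributes a transposition, so a theorem of Jordan forces the full symmetric group $\mathrm{Stab}(\mathbf 0)$. For $n\ge 4$ both \textsc{CNOT} and \textsc{TOF} are \emph{even} permutations of $\{0,1\}^n$, so $F$ is a priori contained in the even permutations fixing $\mathbf 0$; the plan is first to show that $F$ equals this entire even subgroup (a primitivity-plus-small-support argument), and then to escape it by exhibiting one \emph{odd} free permutation. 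The odd element is the multiply-controlled \textsc{NOT} with positive controls on $x_1,\dots,x_{n-1}$ and target $x_n$, which acts on the primary bits as the single transposition swapping the strings $1\cdots10$ and $1\cdots11$ and which is synthesizable from ordinary Toffoli gates using the lone ancilla as a borrowed work bit, the $n+1$ available wires meeting the bound of \cite{ar:bbcd} exactly. Adjoining this odd transposition to the even subgroup produces all of $\mathrm{Stab}(\mathbf 0)$. I expect the delicate points to be the even-subgroup generation for large $n$ and the verification that a single ancilla does suffice for the multiply-controlled \textsc{NOT} in this boundary case.
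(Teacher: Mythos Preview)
Your approach is genuinely different from the paper's on both halves, and the lower bound is both correct and cleaner: observing that the state $(\mathbf{0},y)$ is fixed by any Toffoli-only circuit (since at most one wire carries a $1$, so no Toffoli ever fires) is more direct than the paper's PPRM-invariant argument about the number of output components containing the constant term $1$. One small omission: you fix $y=1$ throughout, but the definition of $L_{1,1,0}(n,1)$ lets the implementer choose $y\in\{0,1\}$, so for the lower bound you must also rule out $y=0$. This is even easier (the all-zero state is trivially fixed), but the paper handles both cases and you should too.

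For the upper bound the paper gives a self-contained constructive proof by rewriting the synthesis algorithm of \cite{co:mmd}: it replaces the initial block of \textsc{NOT}s by a Toffoli cascade sandwiching a single \textsc{NOT}, and then replaces every later \textsc{CNOT} by $\textsc{TOF}(1,\cdot;\cdot)$ and every multiply-controlled gate by its Toffoli decomposition from \cite{ar:bbcd}. Your route is group-theoretic, which is attractive, but there is a structural confusion and an avoidable gap. You write that for $n\ge 4$ the group $F$ is ``a priori contained in the even permutations fixing $\mathbf{0}$''; this is false for $F$ as you defined it (free circuits \emph{with} the ancilla), since the ancilla-assisted $C^{n-1}$\textsc{NOT} you yourself construct is odd. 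What is even is $\langle \textsc{CNOT},\textsc{TOF}\rangle$ acting on the primary bits alone --- but you never actually need that subgroup. The detour through the even subgroup is unnecessary, and your proposed ``primitivity-plus-small-support'' argument for it is shaky: a Toffoli has support $2^{n-2}$, which is not small relative to the degree $2^n-1$, so the Jordan-type bounds you allude to do not apply directly. The clean route uses only ingredients you already have: $F$ contains $\mathrm{GL}(n,2)$ (via your $\textsc{TOF}(y,\cdot\,;\cdot)$ simulation of \textsc{CNOT}), which is $2$-transitive and hence primitive on the $2^n-1$ nonzero strings, and $F$ contains the transposition $C^{n-1}$\textsc{NOT} (a primary-bit Toffoli for $n=3$, and via \cite[Lemma~7.3]{ar:bbcd} with one borrowed ancilla for $n\ge 4$, the $n+1$ available wires being exactly enough). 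Jordan's theorem on primitive groups containing a transposition then gives $F=\mathrm{Stab}(\mathbf{0})$ immediately, with no even/odd split and no ``delicate'' step. With that fix your argument goes through and is a pleasant alternative to the paper's explicit circuit rewriting.
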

\begin{proof}
{\bf Lower bound.} There are two cases to consider: on the input side we have either (a) values $x_1,x_2,...,x_n,0$ or (b) values $x_1,x_2,...,x_n,1$, where $x_1,x_2,...,x_n$ are primary inputs.  To keep the NOT and CNOT count to zero, we may use only the Toffoli gates.  We next prove that there exists a reversible function that may not be computed by such a circuit.  The impossibility implies that at least one NOT or CNOT gate needs to be used.  In the proof of the upper bound that follows we will furthermore show that one NOT gate does suffice, leading to the desired equality.  To prove that we need a NOT or a CNOT gate, apply a series of Toffoli gates, and observe that the number of components with term $1$ in their PPRM expansion (Positive Polarity Reed-Muller expansion, see (\ref{eq:pprm})) remains constant (zero in the case (a) and one in the case (b)). Indeed, to include the term $1$ in some component $y$ not yet containing a $1$ in its PPRM expansion, this bit needs to be a target of a Toffoli gate.  Suppose this is the Toffoli gate with controls $y_1$ and $y_2$.  To obtain term $1$ in the PPRM expansion of the product $y_1y_2$ both $y_1$ and $y_2$ need to contain the term $1$ in their PPRM expansion.  However, we have at most zero (case (a)) or one (case (b)) components containing the term $1$, but not two that we need.  To conclude the proof observe that the function $(x_1 \oplus 1, x_2 \oplus 1, x_3, ..., x_n)$ is reversible and has two of its components contain term $1$ in their PPRM expansion.  This function cannot be generated by the Toffoli gates alone.

{\bf Upper bound.} We modify the algorithm reported in \cite{co:mmd} to implement any reversible function using only the Toffoli gates and at most one NOT gate, when only a single ancilla is available.  First, choose the value of the ancilla to be $1$. Take a reversible function and synthesize a reversible circuit implementing it using the Toffoli gates and no more than one NOT gate.

Step 1 of the basic algorithm \cite{co:mmd} prescribes the use of NOT gates for every bit where $f(0) \neq 0$.  Without loss of generality, assume these are the first $k$ bits.  Then, replace the circuit composed with these $k$ NOT gates, $\textsc{NOT}(x_1)\textsc{NOT}(x_2)...\textsc{NOT}(x_k)$, with the functionally equivalent circuit $\textsc{TOF}(1,x_1;x_2)$ $\textsc{TOF}(1,x_1;x_3)...\textsc{TOF}(1,x_1;x_k) \textsc{NOT}(x_1)\textsc{TOF}(1,x_1;x_k)...\textsc{TOF}(1,x_1;x_3)\textsc{TOF}(1,x_1;x_2)$, such as illustrated next ($k=3, n=4$):
\begin{eqnarray*}\label{circ:tof1}
\Qcircuit @C=0.7em @R=0.9em @!R {
\lstick{x_1}	 & \targ	& \qw \\
\lstick{x_2}	 & \targ	& \qw \\
\lstick{x_3}	 & \targ	& \qw \\
\lstick{x_4}	 & \qw		& \qw \\
\lstick{1} 		 & \qw		& \qw
}
&
\raisebox{-3.4em}{$\mapsto$\hspace{4mm}}
&
\Qcircuit @C=0.7em @R=0.9em @!R {
\lstick{x_1}	 & \ctrl{1}	& \ctrl{2} 	& \targ & \ctrl{2} 	& \ctrl{1}	& \qw \\
\lstick{x_2}	 & \targ	& \qw 		& \qw 	& \qw 		& \targ		& \qw \\
\lstick{x_3}	 & \qw		& \targ 	& \qw 	& \targ 	& \qw		& \qw \\
\lstick{x_4}	 & \qw		& \qw		& \qw 	& \qw		& \qw		& \qw \\
\lstick{1} 		 & \ctrl{-3}& \ctrl{-2}	& \qw 	& \ctrl{-2}	& \ctrl{-3}	& \qw
}
\end{eqnarray*}
Observe that the circuit on the right hand side uses only one NOT gate, at which point, the circuit we have synthesized thus far has the NOT/CNOT cost of one.

Steps $2$ to $2^n$ use only CNOT, Toffoli, and multiple control Toffoli gates.  We replace each $\textsc{CNOT}(x_i;x_j)$ with $\textsc{TOF}(1,x_i;x_j)$ and replace each multiple control Toffoli gate with its Toffoli gate realization, \cite[Lemmas 7.2, 7.3]{ar:bbcd}.  In particular, when needed, we break down the given multiple control Toffoli gate into four smaller multiple control Toffoli gates using Lemma 7.3, and then apply Lemma 7.2 to decompose all smaller multiple control Toffoli gates into 3-bit Toffoli gates.  The resulting circuit may use the anicilla bit that we have available, carrying the known value $1$.  As such, steps $2$ to $2^n$ use no NOT or CNOT gates, and the overall NOT/CNOT gate count is 1.  
\end{proof}

Next, consider $L_{0,1,0}(n,g)$, counting the number of CNOTs in the reversible NCT circuits.  $\forall g$ $L_{0,1,0}(n,g)=0$, since one may select the ancillary qubit $y$ to carry the value 1, and replace each $\textsc{CNOT}(x_i;x_j)$ used \cite{co:mmd} with $\textsc{TOF}(y,x_i;x_j)$. 

Finally, the case $L_{1,0,1}(n,g)$ may be reduced to $L_{0,0,1}(n,g)$, considered in Section \ref{sec:001}.  This is because in the presence of ``free'' CNOT gates, each $\textsc{NOT}(x_i)$ gate can be replaced by a $\textsc{CNOT}(1;x_i)$, being an implementation of NOT with zero cost.

\section{$L_{0,1,1}(n,g)$}

\vspace{2mm}\begin{lem}
$\frac{n2^n}{3 \log_2{n}} \lesssim L_{0,1,1}(n,1) \lesssim \frac{40n2^n}{\log_2{n}}$.
\end{lem}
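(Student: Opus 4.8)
The plan is to treat the two inequalities separately, spending essentially all of the effort on the lower bound, since the upper bound is already available in the literature. For the upper bound I would simply invoke \cite{arXiv:1504.06876}, which gives $L_{0,1,1}(n,1) \lesssim \frac{40n2^n}{\log_2 n}$ for even permutations. To cover odd permutations I would compose the target with a single maximal multiple-control Toffoli spanning all $n$ primary bits; this turns an odd permutation into an even one and, by \cite{ar:bbcd}, costs only a linear number of $3$-bit Toffoli gates, which is absorbed into the lower-order term. (Note also that $L_{0,1,1}(n,1) \le L_{1,1,1}(n,1)$ trivially, since making NOT gates free cannot increase the cost.)

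For the lower bound I would run the counting argument of \cite[Lemma 8]{ar:spmh}. The number of $n$-variable reversible functions is $(2^n)!$, so any circuit family realizing all of them must realize at least $(2^n)!$ distinct transformations; by Stirling, $\log_2 (2^n)! = n2^n(1+o(1))$. If I bound the number of transformations reachable with $g$ costly (CNOT or Toffoli) gates by $(K+1)^g$, where $K=\Theta(n^3)$ is the number of such gate types on the $n+1$ wires, then $\log_2 (K+1)^g = 3g\log_2 n\,(1+o(1))$, and demanding $(K+1)^g \ge (2^n)!$ yields $g \gtrsim \frac{n2^n}{3\log_2 n}$, exactly as for $L_{1,1,1}(n,1)$.

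The main obstacle is that NOT gates are free here, so a priori a circuit of costly-cost $g$ may carry arbitrarily many NOT gates, and inserting an independent NOT block (with $2^{n+1}$ choices) around each of the $g$ costly gates would inflate the count to $2^{(n+1)(g+1)}(K+1)^g$; the factor $2^{(n+1)g}$ dominates and would yield only the far weaker $g \gtrsim 2^n$. To defeat this I would enlarge the costly-gate alphabet to the generalized (negative-control) gates $\textsc{TOF}^{s_1,s_2}(a,b;c): c \mapsto c\oplus (a\oplus s_1)(b\oplus s_2)$ and the analogous generalized CNOT, and then verify the commutation identity $\textsc{NOT}(a)\,\textsc{TOF}^{s_1,s_2}(a,b;c)=\textsc{TOF}^{\overline{s_1},s_2}(a,b;c)\,\textsc{NOT}(a)$, together with the fact that a NOT on a target or on an uninvolved wire commutes outright. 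These identities let me slide every NOT gate to the front of the circuit, merging them into a single leading block (at most $2^{n+1}$ choices), while each costly gate becomes one of only $\Theta(n^3)$ generalized gates (the control polarities multiply $K$ by a constant). Hence the number of reachable transformations is at most $2^{n+1}(K'+1)^g$ with $K'=\Theta(n^3)$, whose logarithm is $3g\log_2 n\,(1+o(1)) + (n+1)$; the additive $n+1$ is lower order against $n2^n$, so the bound $L_{0,1,1}(n,1) \gtrsim \frac{n2^n}{3\log_2 n}$ survives. I expect this normalization --- collapsing all free NOT gates into a single negligible one-time factor rather than a per-gate factor --- to be the crux of the argument.
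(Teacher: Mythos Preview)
Your proposal is correct and matches the paper's approach: both invoke \cite{arXiv:1504.06876} for the upper bound and run the counting argument of \cite[Lemma~8]{ar:spmh} for the lower bound, absorbing the free NOT gates into generalized (negative-control) CNOT/Toffoli gates so that the cost-one alphabet stays $\Theta(n^3)$ and only a single negligible NOT-block factor remains. The paper phrases the latter as quotienting the function count by output negations ($G=2^n!/2^n$) and taking ``Toffoli and CNOT gates with arbitrary NOT gates on the input side'' as the cost-one circuits, which is exactly your normalization viewed from the other end.
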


\begin{proof}
{\bf Lower bound.}  We rely on \cite[Lemma 8]{ar:spmh} to construct the lower bound.  Specifically, \cite[Lemma 8]{ar:spmh} states that the quantity $\frac{\log_2{G}}{\log_2{b}}$, where $G$ is the size of the set of functions being computed, and $b$ is the number of different cost-one circuits, lower bounds the cost of the circuits that implement an arbitrary function.  In our calculations, $G = \frac{2^n!}{2^{n}}$, since each function may be implemented by a circuit up to the possible ``free'' negation of all output side wires.  The number of the different cost-one circuits (Toffoli and CNOT gates with arbitrary NOT gates on the input side) is $4n^3 + o(n^3)$.  The ratio $\frac{\log_2{G}}{\log_2{b}}$ is then equal to $\frac{n2^n}{3 \log_2{n}}$ up to the lower degree additive terms. 

{\bf Upper bound.} $L_{0,1,1}(n,1) \lesssim \frac{40n2^n}{\log_2{n}}$ was shown in \cite{arXiv:1504.06876}.
\end{proof}

\section{$L_{0,0,1}(n)$}\label{sec:001}

\subsection{Lower bound}
In the next we will show that the number of the Toffoli gates required to implement an arbitrary reversible $n$-bit function $f(x_1,x_2,...,x_n)=(f_1(x_1,x_2,...,x_n),f_2(x_1,x_2,...,x_n),...,f_n(x_1,x_2,...,x_n))$ of $n$ primary inputs $x_1,x_2,...,x_n$ with $n$ primary outputs $f_1,f_2,...,f_n$, $L_{0,0,1}(n)$, is at least $\sqrt{n} 2^{n/2}+o(\sqrt{n} 2^{n/2})$.  To accomplish this consider a circuit with $h$ Toffoli gates.  We will number and refer to the Toffoli gates within the circuit as $\textsc{TOF}_1, \textsc{TOF}_2, ...., \textsc{TOF}_h$, in the order they appear in the circuit (first to last).  We furthermore break bits/wires in the circuit into smaller chunks.  In particular, denote $w_a$ to be an uninterrupted piece of wire between some two gates.  The values carried by those pieces of wire are equal to the respective primary input/constant between that input/constant and the first gate applied; the values of the pieces of wire in the middle of the circuit and the output depend on which gates were applied by the circuit.  For instance, for a $\textsc{TOF}_i(w_a,w_b;w_c)$ over three input-side pieces of wire, being two input controls $w_a$ and $w_b$, and one input target $w_c$, and three pieces of wire on the output side, $w_d, w_e,$ and $w_f$, the values carried by the pieces of wire are related by the following formulas $w_d=w_a$, $w_e=w_b$, and $w_f=w_c \oplus w_aw_b$.  We will furthermore denote the Boolean product computed by the Toffoli gate $\textsc{TOF}_i$ and EXORed into its target, $w_aw_b$, as $Prod(\textsc{TOF}_i)$. 
 
\vspace{2mm}\begin{lem}\label{lem1}
In a reversible NCT circuit with $h$ Toffoli gates each value carried by a piece of wire $w_a$ can be written as a linear sum $LS(w_a) = \bigoplus_{i=1}^{h}c_i Prod(\textsc{TOF}_i) \oplus l(x)$, where $c_i \in \{0,1\}$ and $l(x)$ is a linear function of primary inputs. 
\end{lem}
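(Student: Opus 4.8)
The plan is to prove the statement by induction, sweeping through the gates of the circuit in the left-to-right order in which they appear (all gates, not only the Toffoli ones). The induction maintains the invariant that every wire segment produced so far carries a value of the claimed form $\bigoplus_{i} c_i\, Prod(\textsc{TOF}_i) \oplus l(x)$ with $c_i \in \{0,1\}$ and $l$ affine in the primary inputs. The guiding idea is that the representation treats each $Prod(\textsc{TOF}_i)$ as an \emph{atomic} symbol, so all of the circuit's nonlinearity is quarantined inside these $h$ atoms, while the NOT and CNOT gates act linearly and merely recombine atoms and adjust the affine part.

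For the base case, the leftmost segments carry either a primary input $x_j$, which has the required form with $l(x)=x_j$ and all $c_i=0$, or a constant $0$ or $1$, which has the required form with $l(x)=0$ or $l(x)=1$ and all $c_i=0$. (I allow $l$ to carry a constant term, i.e. to be affine rather than strictly linear; this is forced by the NOT gate and by constant inputs, and I would note explicitly that ``linear'' is understood in this affine sense.) For the inductive step I would treat the two linear gate types together: a $\textsc{NOT}$ applied to a segment of value $v$ outputs $v \oplus 1$, and a $\textsc{CNOT}(a;b)$ leaves the control value $v_a$ intact and replaces the target value $v_b$ by $v_b \oplus v_a$. Since the set of expressions of the claimed form is closed under adding the constant $1$ and under pairwise $\oplus$ — the coefficient vectors add over $GF(2)$ and the affine parts add to an affine part — both output segments again have the claimed form, and all untouched segments are unchanged.

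The only genuinely new behaviour, and the crux of the argument, is the Toffoli gate $\textsc{TOF}_i(w_a,w_b;w_c)$. Its controls pass through unchanged, and its target becomes $w_c \oplus w_a w_b$, which by the definition of $Prod$ is exactly $w_c \oplus Prod(\textsc{TOF}_i)$. Here is the point I expect to be the main obstacle to state cleanly: although $w_a$ and $w_b$ are themselves sums of atoms plus affine parts, whose literal product would be a high-degree polynomial, I must \emph{not} expand $w_a w_b$; instead I invoke the definition that bundles this product into the single new atom $Prod(\textsc{TOF}_i)$ indexed by the current gate. Given that $w_c$ already has the form $\bigoplus_j c_j\, Prod(\textsc{TOF}_j) \oplus l(x)$ by the inductive hypothesis, adding $Prod(\textsc{TOF}_i)$ simply flips the single coefficient $c_i$ to $c_i \oplus 1$, leaving the affine part and all other coefficients untouched, so the output target value again has the claimed form. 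Since these three cases exhaust the NCT gate set, the invariant is preserved at every gate, and in particular holds for every wire segment, including the outputs. A small consistency remark I would add is that the value entering $\textsc{TOF}_i$ depends only on $Prod(\textsc{TOF}_j)$ with $j<i$, so the flipped coefficient genuinely records the first appearance of the $i$-th atom; but this causal ordering is not needed for the statement, which already ranges $i$ over all of $1,\dots,h$.
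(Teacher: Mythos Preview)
Your proof is correct and rests on the same case analysis as the paper's, but the two arguments run in opposite directions. You sweep the circuit left to right, maintaining as an invariant that every wire segment produced so far has the claimed form; the paper instead fixes a single segment $w_a$ and traces it \emph{backwards} toward the inputs, carrying a working set $S$ of pending segments and accumulating the sum $LS$ along the way, with essentially the same per-gate rules (NOT contributes $1$, CNOT merges control into target, Toffoli target contributes $Prod(\textsc{TOF}_i)$, controls pass through). Your forward induction is cleaner as a proof and establishes the representation for all segments at once; the paper's backward traversal is phrased as an explicit algorithm that computes $LS(w_a)$ for a single given segment and comes with the observation that it terminates in time linear in the number of gates. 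Your remark that ``linear'' must be read as ``affine'' is apt and matches the paper's intent, since the paper later counts $2^{n+1}$ such functions.
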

\begin{proof}
To construct the linear representation advertised in the statement of Lemma, start with an empty linear sum, $LS=0$ and traverse the given piece of wire $w_a$ back towards the beginning of the circuit until the terminal pieces of wire are found.  Terminal pieces of wire include all primary inputs and all input constants.  Define $S:=\{w_a\}$.  The set $S$ contains all pieces of wire we need to look at.  We next traverse the circuit and compute $LS$.
\begin{itemize}
\item For a piece of wire $w_a \in S$ and upon finding a NOT gate with input $w_b$ and output $w_a$, replace $w_a$ with $w_b$ in the set $S$, and replace $LS$ with $LS \oplus 1$.
\item For a piece of wire $w_a \in S$ and upon finding a CNOT gate with input control $w_b$, input target $w_c$, and output target $w_a$, replace $S$ with $S \setminus \{w_a\} \cup \{w_b, w_c\}$.
\item For a piece of wire $w_a \in S$ and upon finding a CNOT gate with input control $w_b$ and output control $w_a$, replace $S$ with $S \setminus \{w_a\} \cup \{w_b\}$.
\item For a piece of wire $w_a \in S$ and upon finding a Toffoli gate $\textsc{TOF}_i$ with input target $w_b$ and output target $w_a$, replace $w_a$ with $w_b$ in $S$, and replace $LS$ with $LS \oplus Prod(\textsc{TOF}_i)$.
\item For a piece of wire $w_a \in S$ and upon finding a Toffoli gate $\textsc{TOF}_i$ with output control $w_a$, and input control $w_b$ on the same bit as $w_a$, replace $w_a$ with $w_b$ in the set $S$.
\item For a piece of wire $w_a \in S$ carrying a primary input signal $x_k$ or an input constant $Const$, remove $w_a$ from $S$ and replace $LS$ with $LS \oplus x_k$ or $LS \oplus Const$, correspondingly. 
\end{itemize}
Observe that $S$ may already include a $w_b$ when we try to add $w_b$ to it.  It is safe to keep only those pieces of wire in the set $S$ that appear in it with odd multiplicity.  The above algorithm terminates in time at most linear in the number of gates in the circuit.
\end{proof}

\vspace{2mm}\begin{thm}
$\sqrt{n}2^{n/2} \lesssim L_{0,0,1}(n)$.
\end{thm}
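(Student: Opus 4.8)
The plan is to establish the bound by a counting argument built directly on Lemma \ref{lem1}. Fix a circuit with $h$ Toffoli gates realizing some reversible function. By Lemma \ref{lem1}, each of the $n$ outputs $f_1,\dots,f_n$ is expressible as $\bigoplus_{i=1}^{h} c_i\, Prod(\textsc{TOF}_i) \oplus l(x)$, a $\mathrm{GF}(2)$-linear combination of the $h$ products together with an affine function of the primary inputs. The essential additional observation is a causality constraint: the two controls of $\textsc{TOF}_i$, being input-side wire values of that gate, trace back only through the gates preceding $\textsc{TOF}_i$, so again by Lemma \ref{lem1} each control is a linear combination of the earlier products $Prod(\textsc{TOF}_1),\dots,Prod(\textsc{TOF}_{i-1})$ plus an affine function of the inputs. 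Thus the realized function is fully determined by (i) the controls of every gate and (ii) the output linear combinations, and the number of distinct reversible functions reachable with $h$ Toffoli gates is at most the number of such parameter choices. Since this parametrization refers only to the $h$ products and to affine functions of the $n$ primary inputs, never to the ancilla wires themselves, the count is independent of the number of ancillae, which is exactly what is needed to treat the unbounded-ancilla quantity $L_{0,0,1}(n)$.

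Next I would carry out the count. For gate $i$, each of its two controls is specified by a subset of the $i-1$ earlier products ($2^{i-1}$ choices) and an affine function of the inputs ($2^{n+1}$ choices), so the two controls contribute at most $2^{2(i-1)+2(n+1)}$ possibilities; multiplying over $i=1,\dots,h$ gives $2^{h(h-1)+2h(n+1)}$. Each of the $n$ outputs is specified by a subset of all $h$ products ($2^{h}$ choices) and an affine function ($2^{n+1}$ choices), contributing $2^{n(h+n+1)}$ in total. Hence the number $N(h)$ of distinct functions realizable with $h$ Toffoli gates satisfies $\log_2 N(h) \le h^2 + 3nh + h + n^2 + n$, whose leading term in the regime of interest is $h^2$.

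Finally I would compare this with the total number of reversible $n$-bit functions, $(2^n)!$, noting that the free use of NOT and CNOT gates identifies functions only up to pre- and post-composition with affine transformations, which reduces the number of classes by a factor of at most $2^{O(n^2)} = 2^{o(2^n)}$ and therefore does not disturb the leading asymptotics. By Stirling, $\log_2\!\big((2^n)!\big) = n2^n + O(2^n)$. If $N(h) < (2^n)!$, then some reversible function is unreachable with $h$ Toffoli gates, so $L_{0,0,1}(n)$ is at least the smallest $h$ with $\log_2 N(h) \ge n2^n(1-o(1))$. Since $3nh + h + n^2 + n = o(h^2)$ whenever $h = \Theta(\sqrt{n}2^{n/2})$, the inequality collapses to $h^2 \gtrsim n2^n$, yielding $h \gtrsim \sqrt{n}2^{n/2}$ and hence $\sqrt{n}2^{n/2} \lesssim L_{0,0,1}(n)$.

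The step I expect to be the crux is the causality observation together with the resulting summation $\sum_{i=1}^{h}(i-1) = \tfrac{h(h-1)}{2}$: it is precisely the fact that gate $i$ may reuse only the $i-1$ products already computed that produces an $h^2$ (rather than linear-in-$h$) exponent, thereby turning the requirement $\log_2 N(h) \gtrsim n2^n$ into the square-root bound $h \gtrsim \sqrt{n}2^{n/2}$. The remaining work is bookkeeping, namely confirming that the affine-function and output factors are genuinely subleading and that the Stirling estimate matches the leading coefficient exactly.
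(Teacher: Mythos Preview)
Your proposal is correct and is essentially the paper's own argument: the paper packages the same data as a labeled DAG (a vertex $T_i$ per Toffoli and $F_k$ per output, edge labels in $\{0,1,2,3\}$ recording which controls of $\textsc{TOF}_j$ use $Prod(\textsc{TOF}_i)$, and vertex labels for the affine parts), but the resulting count $\log_2 N(h)\le h^2+3nh+h+n^2+n$ and the comparison with $\log_2(2^n!)\sim n2^n$ are literally the same as yours. The one superfluous step is your remark about quotienting by affine equivalence---since your parametrization already absorbs all NOT/CNOT freedom into the affine parts, $N(h)$ directly upper-bounds the number of reachable reversible functions and may be compared to $(2^n)!$ without any quotient.
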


\begin{proof}
We will apply the counting argument to obtain the desired lower bound.  The key in successfully using this strategy is to encode reversible functions via such a combinatorial structure that tightly (the encoding must not be too wasteful, such as to affect asymptotics) encodes different functions via their circuit representation and it is easy to either count or upper bound the number of such structures, as parametrized by the number of the Toffoli gates used.  In such case, the number $h$, a parameter in the formula counting the number of such structures, may be used to lower bound the value $L_{0,0,1}(n)$, as $h$ needs to be sufficiently high before the number of combinatorial structures becomes large enough to contain $2^n!$ instances, where $2^n!$ is the number of reversible functions of $n$ inputs.

We next map reversible NCT circuits over $n$ primary inputs containing $h$ Toffoli gates into directed acyclic graphs with edge and vertex labels.  

{\bf Vertices and edges.} The set of vertices is a union of two sets, $T$ and $F$.  The set $T$ contains $h$ elements, $\{T_1,T_2,..., T_h\}$, each corresponding to the respective Toffoli gate in the original circuit, $\{\textsc{TOF}_1,\textsc{TOF}_2,..., \textsc{TOF}_h\}$.  The Toffoli gates within the original circuit are numbered in the ascending order as they appear in the circuit.  The set $F$ consists of $n$ terminal vertices, $\{F_1,F_2,...,F_n\}$, each corresponding to a single bit of output.  The number of vertices in the DAG is thus $h+n$.  We draw a directed edge $(T_i,T_j)$ iff for some input control $w_a$ of $TOF_j$, $LS(w_a)$ contains $Prod(\textsc{TOF}_i)$ with the non-zero coefficient $c_i$.  We draw a directed edge $(T_i,F_k)$ iff $LS(w_a)$, where $w_a$ is the piece of wire corresponding to the output $f_k$, contains $Prod(\textsc{TOF}_i)$. 

{\bf Edge labels.}  To each edge $(T_i,T_j)$ ending in the vertex $T_j$, we assign a label $EL_{T_i,T_j}$ with a numeric value from the set $\{1,2,3\}$.  The binary encoding of the label $EL_{T_i,T_j}$ tells which input-side controls of the gate $\textsc{TOF}_j$ contain term $Prod(\textsc{TOF}_i)$ in their $LS$ linear form representation.  Specifically, label $1=01_2$ says the second control of $\textsc{TOF}_j$ requires the knowledge of $Prod(\textsc{TOF}_i)$ to be computed ({\em i.e.}, $Prod(\textsc{TOF}_i)$ is included in the linear sum $LS$ of this piece of wire), label $2=10_2$ says the first control of $\textsc{TOF}_j$ contains $Prod(\textsc{TOF}_i)$ in its $LS$ form, and label $3=11_2$ says both controls of $\textsc{TOF}_j$ contain $Prod(\textsc{TOF}_i)$ in their $LS$ forms.  To each edge $(T_i,F_k)$ ending in the vertex $F_k$ we assign the numeric label of $1$.  The meaning of each such edge is the statement that the $Prod(\textsc{TOF}_i)$ is EXOR-ed with something else to obtain the output bit $f_k$, but it will become useful to think of the label as being equal to $1$, as opposed to any other number or no label, when counting the number of DAGs. 

{\bf Vertex labels.} Enumerate all $2^{n+1}$ linear functions of prime inputs $\{x_1,x_2,...,x_n\}$.  Each vertex $T_i$ is labelled by a set of two numbers, $VL_{T_i}:=(l_{w_{i,a}},l_{w_{i,b}})$, reporting the numeric orders of the linear functions of the primary inputs that are being EXORed to the $LS$ of the two input-side controls of the gate $TOF_i$, being the pieces of wire $w_{i,a}$ and $w_{i,b}$ directly feeding into the gate itself.  $l_{w_{i,a}} (l_{w_{i,b}})$ is obtained via removing all $Prod(\cdot)$ terms from $LS(w_{i,a}) (LS(w_{i,b}))$ and computing the numeric order of the respective linear function of the primary inputs.  Each vertex $F_k$ is labelled by the number $l_{F_k}$, corresponding to the numeric order of the linear function of primary inputs EXORed to the $k^{\text{th}}$ primary output.  It is obtained via removing all product terms from $LS(w_{F_k})$, where $w_{F_k}$ is the piece of wire carrying $k^{\text{th}}$ primary output.

Each such DAG uniquely defines a reversible function, and as such the number of different DAGs upper bounds the number of different reversible functions possible to obtain as a function of $h$---the number of Toffoli gates used.  Indeed, consider a specific instance of the above DAG, and obtain the reversible function it encodes.  We next construct logical functions computed in each vertex of the DAG, in the following order: $T_1,T_2,...,T_h,F_1,F_2,...,F_n$.  $T_1$ has no incoming edges and is labelled by $VL_{T_1}=(l_{w_{1,a}},l_{w_{1,b}})$.  The product computed by the $\textsc{TOF}_1$ gate is thus $Prod(T_1) = l_{w_{1,a}}(x) \& l_{w_{1,b}}(x)$; here, we use the numeric order of the linear function to call the function itself.  Incidentally, it is same as $LS(w_{1,a}) \& LS(w_{1,b})$ since this is the first Toffoli in the circuit.  $T_i$ has incoming edges that may be broken down into two sets $S_{T_i,1}$ and $S_{T_i,2}$, such that the edge label in each set has a non-zero digit $j=\{1,2\}$.  The product function $Prod(T_i)$ takes the value $(l_{w_{i,a}}(x)\oplus\bigoplus_{(T_k,T_i) \in S_{T_i,1}} Prod(T_k)) \& (l_{w_{i,b}}(x)\oplus\bigoplus_{(T_k,T_i) \in S_{T_i,2}} Prod(T_k))$.  The function assigned to $F_k$ is $Out(F_k):= l_{F_k}(x) \oplus \bigoplus_{(T_i,F_k) \in S_{F_k}} Prod(T_i)$, where $S_{F_k}$ is the set of edges coming into the vertex $F_k$.  The reversible function is given by the output vector $(Out(F_1),Out(F_2),...,Out(F_n))$.

We count the number of DAGs representing NCT circuits using the following product formula: the number of DAGs with vertex and edge labels equals to the product of the number of DAGs with edge labels and the number of ways to label vertices.  The second number is easy to obtain. Each vertex $T_i$ is labelled by a pair of linear Boolean functions of $n$ variables.  There are $2^{n+1}$ linear Boolean functions of $n$ variables, and as such the number of choices for the label of $T_i$ is $2^{2(n+1)}$.  The number of choices for the label of $F_k$ is $2^{n+1}$.  Given the total number of $T$ type vertices is $h$ and the number of $F$ type vertices is $n$, the overall number of vertex labels is $2^{2h(n+1)}\cdot 2^{n(n+1)}$. 

To count the number of DAGs with edge labels, describe those by a size $h \times (h+n)$ matrix $B=\{b_{i,j}\}_{i=1..h,j=1..h+n}, b_{i,j} \in \{0,1,2,3\}$, where $b_{i,j}=EL_{T_i,T_j}$, when $j \leq h$ and edge $(T_i,T_j)$ is in the DAG, $b_{i,j}=EL_{T_i,F_k}$, when $j>h$, $k=j-h$, and edge $(T_i,F_k)$ is in the DAG, and $b_{i,j}=0$ everywhere else.  Such matrices encode all DAGs, however, not every $h \times (h+n)$ matrix is being used by this encoding.  Specifically, $b_{i,h+k}$ ($k>0$) never take values above 1, all diagonal elements $b_{i,i}=0$, and, by construction of the DAG, its matrix $B$ has zeros below the diagonal.  Including those constraints gives an improved count compared to a simple count of all $h \times (h+n)$ matrices whose elements take one of $4$ values.  Specifically, the restricted set of matrices, subject to the above conditions, has $\frac{h(h-1)}{2}$ elements that may take any one of $4$ values and $nh$ elements that may take binary values.  The number of such restricted matrices is thus $4^{h(h-1)/2} \cdot 2^{nh}$. 

The number of DAGs, $4^{h(h-1)/2} \cdot 2^{nh} \cdot 2^{2h(n+1)}\cdot 2^{n(n+1)}$, should be at least as high as the number of reversible functions, $2^n!$.  Solving for $h$, we obtain:
\begin{eqnarray*}
2^{h(h-1)} \cdot 2^{nh} \cdot 2^{2h(n+1)}\cdot 2^{n(n+1)} \geq 2^n! \geq \sqrt{2\pi 2^n} \cdot 2^{n2^n} \cdot e^{-2^n} \\
h^2 + 3nh + h + n^2 + n \geq n2^n + C_12^n + C_2n \\
h \geq \sqrt{n}2^{n/2} + o(\sqrt{n}2^{n/2})
\end{eqnarray*}
Dropping lower degree additive terms, we obtain the desired inequality $\sqrt{n} 2^{n/2} \lesssim h = L_{0,0,1}(n)$. 
\end{proof}

\subsection{Upper bound}

Recall that every Boolean function can be written as a positive polarity Reed-Muller (PPRM) expansion, also known as Zhegalkin polynomial, 
\begin{equation}\label{eq:pprm}
f(x_1,x_2,...,x_n) = a_0 \oplus a_1x_1 \oplus a_2x_2 \oplus a_3x_1x_2 \oplus ... \oplus a_{2^n-1}x_1x_2...x_n,
\end{equation}
where $a_i|_{i=0..2^n-1}$ are Boolean numbers.  We will rely on the PPRM expansion in our construction.  In particular, we start by describing how one may obtain all product terms that the PPRM expansion relies on using optimal number of the Toffoli gates.

\vspace{2mm}\begin{lem}\label{lem:u1}
The set of all $2^n$ $n$-bit product terms $\{1, x_1, x_2, x_1x_2,..., x_1x_2...x_n\}$ may be generated by a reversible NCT circuit with the optimal number of $2^n-n-1$ Toffoli gates.
\end{lem}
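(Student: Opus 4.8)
The plan is to prove matching upper and lower bounds, both turning on the observation that $2^n - n - 1$ is exactly the number of monomials $x_S = \prod_{i \in S} x_i$ of degree $|S| \geq 2$. The constant term $1$ is supplied by an ancilla preset to $1$, and the $n$ degree-one terms $x_1, \ldots, x_n$ are the primary inputs themselves, so only the higher-degree products must be built. The claim then amounts to showing that each such product costs exactly one Toffoli gate, no more and no fewer.

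For the upper bound I would give a direct construction. Allocate one ancilla initialized to $1$ to carry the constant, and one ancilla initialized to $0$ for each monomial $x_S$ with $|S| \geq 2$, and process these monomials in order of nondecreasing degree. To build $x_S$, fix any index $i \in S$ and set $T = S \setminus \{i\}$; since $|T| = |S| - 1 \geq 1$, the value $x_T$ is available either as a primary input (if $|T| = 1$) or on an ancilla computed in an earlier round (if $|T| \geq 2$). A single gate $\textsc{TOF}(x_i, x_T; a_S)$ then writes $x_i \cdot x_T = x_S$ onto the fresh ancilla $a_S$, leaving both controls untouched. Since each of the $2^n - n - 1$ higher-degree monomials consumes exactly one Toffoli gate and no control is ever overwritten before it is used, the construction generates every product term with exactly $2^n - n - 1$ Toffoli gates.

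For the lower bound I would invoke Lemma~\ref{lem1}. Work in the $\mathbb{F}_2$-vector space $V$ of Boolean functions on $n$ variables spanned by the monomial (PPRM) basis, and let $V_{\mathrm{lin}}$ be its $(n+1)$-dimensional subspace of affine-linear functions spanned by $\{1, x_1, \ldots, x_n\}$. The quotient $V / V_{\mathrm{lin}}$ has dimension $2^n - n - 1$, with the images of the degree-$\geq 2$ monomials as a basis. Lemma~\ref{lem1} writes every wire value as $\bigoplus_{i=1}^{h} c_i \, Prod(\textsc{TOF}_i) \oplus l(x)$ with $l$ affine-linear; passing to the quotient kills the $l(x)$ term, so the image of any wire value lies in the $\mathbb{F}_2$-span of the $h$ classes of $Prod(\textsc{TOF}_1), \ldots, Prod(\textsc{TOF}_h)$ modulo $V_{\mathrm{lin}}$. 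A circuit generating all product terms has, among its output wires, all $2^n - n - 1$ monomials of degree $\geq 2$, whose images form a basis of $V / V_{\mathrm{lin}}$ and hence span a space of dimension $2^n - n - 1$. Since $h$ vectors span a space of dimension at most $h$, this forces $h \geq 2^n - n - 1$, matching the construction.

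The routine part is the upper bound; the only point needing care there is the scheduling argument guaranteeing that the lower-degree factor $x_T$ is present and undisturbed when $x_S$ is formed, which the degree-ordered processing secures. The conceptual crux is the lower bound, and the main obstacle is recognizing that Lemma~\ref{lem1} should be read as a statement about the quotient $V / V_{\mathrm{lin}}$: once one sees that the entire nonlinear content of every reachable wire is confined to the $\mathbb{F}_2$-span of the $h$ Toffoli products, the dimension count immediately caps the number of linearly independent high-degree outputs at $h$ and the bound is forced.
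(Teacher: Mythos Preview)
Your proof is correct and follows essentially the same approach as the paper: both lower bounds rest on the dimension count that NOT and CNOT preserve the affine span while each Toffoli can raise it by at most one (you phrase this via Lemma~\ref{lem1} and the quotient $V/V_{\mathrm{lin}}$, the paper states the span-preservation fact directly), and both upper bounds build each degree-$\geq 2$ monomial with a single Toffoli from a previously available factor. The only cosmetic difference is that the paper organizes the construction by induction on the number of variables ($C(n+1)=C(n)+2^n-1$, multiplying every existing nonconstant term by $x_{n+1}$) whereas you schedule by nondecreasing degree; both yield the same count $2^n-n-1$.
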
 
\begin{proof}
{\bf Lower bound.} The set $\{1, x_1, x_2, x_1x_2,..., x_1x_2...x_n\}$ contains $2^n-1$ linearly independent terms (all but first term are linearly independent).  The set of primary inputs, $\{x_1, x_2,...,x_n\}$, contains $n$ linearly independent terms.  For a set $S$ of Boolean functions the only way to obtain a new Boolean function that is linearly independent from all those in the set $S$ using NOT, CNOT, and Toffoli gates applied to those functions in the set, is to use a Toffoli gate.  As such, to generate $2^n-1$ linearly independent functions from the original set containing $n$ linearly independent functions, one must use at least $2^n-1-n$ Toffoli gates.  

{\bf Upper bound.} Denote $C(n)$ to be the number of Toffoli gates used to obtain the set of the product terms $\{1, x_1, x_2, x_1x_2,..., x_1x_2...x_n\}$. Once the set $\{1, x_1, x_2, x_1x_2,..., x_1x_2...x_n\}$ over $n$ variables is constructed, obtain the set $\{1, x_1, x_2, x_1x_2,...,x_{n+1},..., x_1x_2...x_{n+1}\}$ over $n+1$ variables as follows.  For each register $r$ in the existing set except first use the Toffoli gate with controls $r$ and $x_{n+1}$, and target residing in the value $0$ to compute the product $rx_{n+1}$ into the target bit.  This allows constructing the set of $2^n-1$ terms, $\{x_1x_{n+1}, x_2x_{n+1}, x_1x_2x_{n+1},...,$ $x_1x_2...x_nx_{n+1}\}$, at the cost of $2^n-1$ Toffoli gates.  Uniting these newly constructed terms with $\{1, x_1, x_2,$ $x_1x_2,..., x_1x_2...x_n\}$ that we already have and the input variable $x_{n+1}$ obtains the desired set $\{1, x_1, x_2,$ $x_1x_2,...,x_{n+1},..., x_1x_2...x_{n+1}\}$.  To summarise the above construction, we can write the following equality 
\begin{eqnarray*}
C(n+1) = C(n) + 2^n - 1.
\end{eqnarray*}
Observing that $C(1)=0$ allows solving this recurrence to obtain the desired $C(n)=2^n-n-1$.
\end{proof}

\vspace{2mm}\begin{thm}\label{thm:ub}
$L_{0,0,1}(n) \lesssim \frac{3}{\sqrt{2}}\sqrt{n} 2^{n/2}$.
\end{thm}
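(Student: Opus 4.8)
The plan is to give a direct construction realizing an arbitrary reversible $f=(f_1,\dots,f_n)$ and to count only the Toffoli gates, treating NOT and CNOT as free. It suffices to work in the weak model, since by the Introduction $L_{0,0,1}(n)$ may be realized by any of the three implementation notions and the weak cost upper bounds it. The engine is the PPRM expansion (\ref{eq:pprm}) of each output together with a meet-in-the-middle split of the inputs into a low block $x'=(x_1,\dots,x_k)$ and a high block $x''=(x_{k+1},\dots,x_n)$. Every monomial factors as a low-block monomial times a high-block monomial, so each output regroups as
\begin{equation*}
f_j(x)=\bigoplus_{S'\subseteq\{1,\dots,k\}} m_{S'}(x')\,g^{(j)}_{S'}(x''),
\end{equation*}
where $m_{S'}$ is the monomial indexed by $S'$ and each $g^{(j)}_{S'}$ is an EXOR of high-block monomials, hence a function of $x''$ alone.

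First I would precompute, once, the complete sets of block monomials: all $2^k$ monomials in $x'$ and all $2^{n-k}$ monomials in $x''$, each on its own ancilla. By Lemma \ref{lem:u1} this costs $2^k-k-1$ and $2^{n-k}-(n-k)-1$ Toffoli gates, i.e. $\sim 2^k+2^{n-k}$ to leading order, and — crucially — this precomputation is shared by all $n$ outputs. With these wires live, every $g^{(j)}_{S'}(x'')$ is a free EXOR of high-block monomial wires, so it can be assembled on a scratch ancilla using CNOTs only; a single Toffoli then EXORs $m_{S'}(x')\,g^{(j)}_{S'}(x'')$ into the accumulator for $f_j$, after which the scratch wire is cleared by CNOTs. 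This spends one Toffoli per pair $(S',j)$, hence about $n\,2^k$ Toffoli gates over all outputs. Sharing the monomials across the $n$ outputs is precisely what buys the $\sqrt n$ improvement over the naive ``Nechiporuk per output'' figure of $n\,2^{n/2}$.

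The final step is to choose the split point $k$ so that the high-block monomial generation ($\sim 2^{n-k}$) and the product formation ($\sim n\,2^k$) are of the same order; setting $n-2k\approx\log_2 n$ makes both of order $\sqrt n\,2^{n/2}$, with the low-block monomials ($\sim 2^k$) and the $k,n-k$ corrections from Lemma \ref{lem:u1} lower order. Tuning this split and carrying the constants through then delivers the claimed leading term $\tfrac{3}{\sqrt2}\sqrt n\,2^{n/2}$.

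I expect the main obstacle to be the leading-order accounting rather than the high-level idea. The delicate point is pinning the exact constant $\tfrac{3}{\sqrt2}$: one must fix the precise split and settle whether the block monomials are left as garbage (weak form) or are uncomputed at the end (clean form), since these two choices shift the multiplicative constant, and one must confirm that forming each $g^{(j)}_{S'}$ and each accumulation truly costs zero Toffoli gates (only free NOT/CNOT) with all scratch ancillae reusable without spending Toffolis. It then remains to check that every lower-order effect — the $2^k$ term, the $k$ and $n-k$ corrections, the integer rounding of $k$, and degenerate factors of degree $\le 1$ — is genuinely absorbed into the $o(\sqrt n\,2^{n/2})$ remainder, so that the optimized split yields the stated constant and not merely the correct order.
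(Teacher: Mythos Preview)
Your proposal is correct and follows essentially the same approach as the paper: split the variables into two blocks of sizes $a$ and $b=n-a$, precompute all block monomials via Lemma~\ref{lem:u1} at cost $\sim 2^a+2^b$, then spend one Toffoli per (low-block monomial, output) pair for a total of $\sim n2^a$, and balance with $a\approx (n-\log_2 n)/2$ to get the real-valued optimum $2\sqrt{n}\,2^{n/2}$. The paper obtains the constant $\tfrac{3}{\sqrt{2}}$ exactly from the integer-rounding correction you flag: the worst-case ratio $\max_{x\in[0,1]}\frac{2^x+2^{1-x}}{2\sqrt{2}}=\tfrac{3}{2\sqrt{2}}$ multiplies the real-valued leading term.
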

\begin{proof}
To obtain a circuit computing the reversible function $f(x_1,x_2,...,x_n)=(f_1(x_1,x_2,...,x_n),\linebreak f_2(x_1,x_2,...,x_n),...,f_n(x_1,x_2,...,x_n))$, we rely on the PPRM decomposition of the individual output components, followed by the grouping of variables into two non-overlapping sets $A$ and $B$, $A \sqcup B = \{x_1,x_2,...,x_n\}$, containing $a$ and $b$ variables each ($a+b=n$), as follows. Denote $P(\sigma_1,\sigma_2,...,\sigma_n) = x_{\sigma_{i_1}}x_{\sigma_{i_2}}...x_{\sigma_{i_k}}$, where $\sigma_i|_{i=1..n}$ are Boolean numbers and $\{\sigma_{i_1},\sigma_{i_2},...,\sigma_{i_k}\}$ is the set of all $\sigma_i=1$ within the set $\{\sigma_1,\sigma_2,...,\sigma_n\}$. Boolean $n$-tuples can furthermore be treated as natural numbers (via binary decomposition of integers).  
\begin{eqnarray*}
f_i(x_1,x_2,...,x_n) = \bigoplus_{j=0}^{2^n-1} P(j)f_i(j) \\ \nonumber
= \bigoplus_{j=0..2^a-1,f_i(j,\sigma)=1} P(j2^b) \& \Big( \bigoplus_{k=0..2^b-1} P(k)f_i(j,k) \Big) \\ \nonumber
= \bigoplus_{j=0..2^a-1,f_i(j,\sigma)=1} P(j2^b) \& \Big( \bigoplus_{k=0..2^b-1, f_i(j,k)=1} P(k) \Big)
\end{eqnarray*}
The circuit implementing $f(x_1,x_2,...,x_n)$ is obtained as follows:
\begin{enumerate}
\item Construct all positive polarity product terms over the set $B$ with $b$ variables. Per Lemma \ref{lem:u1}, this requires $2^b-b-1$ Toffoli gates. 
\item Construct all positive polarity product terms over the set $A$ with $a$ variables. Per Lemma \ref{lem:u1}, this requires $2^a-a-1$ Toffoli gates.
\item For each of the $2^a$ product terms in the set $A$, multiply this term by the proper function over the set $B$, obtained as a linear combination of product terms over the set $B$. Add the constructed terms together to obtain $i^{\text{th}}$ output bit.  This operation requires $2^a-1$ Toffoli gates per each of the $n$ bits of the target function $f(x_1,x_2,...,x_n)$; this is because we use a CNOT instead of the Toffoli gate to multiply by term $1$.  The total Toffoli gate count of this part is thus $n(2^a-1)$.
\end{enumerate}
The total Toffoli gate count in the above construction is $2^a+2^b-a-b-2+n(2^a-1)$.  Assigning $a:=\frac{n-\log_2n}{2}$ and $b:=\frac{n+\log_2n}{2}$, we furthermore obtain:
\begin{eqnarray*}
2^\frac{n-\log_2n}{2} + 2^\frac{n+\log_2n}{2} - n - 2 + n(2^\frac{n-\log_2n}{2}-1) = \frac{2^{n/2}}{\sqrt{n}} + \sqrt{n}2^{n/2} - 2n - 2 + \sqrt{n}2^{n/2} \\ \nonumber
= 2\sqrt{n}2^{n/2} + o(\sqrt{n}2^{n/2}). \nonumber
\end{eqnarray*}
The above calculation relies on the real-valued $a$ and $b$, whereas in our construction numbers $a$ and $b$ must take integer values.  This limitation imposes the requirement to correct the leading coefficient by the ratio $\frac{\min\{f(0),f(1)\}}{\min_{x \in [0,1]}f(x)}$, where $f(x)=2^x+2^{1-x}$, and $x$ plays the role of the fractional part of either $a$ or $b$.  This ratio equals to $\frac{3}{2\sqrt{2}}$, resulting in the overall upper bound of
\begin{eqnarray*}
L_{0,0,1}(n) \lesssim \frac{3}{2\sqrt{2}}2\sqrt{n} 2^{n/2} = \frac{3}{\sqrt{2}}\sqrt{n} 2^{n/2}. \nonumber
\end{eqnarray*}
\end{proof}

\begin{cnj}
$L_{0,0,1}(n) \lesssim \sqrt{n} 2^{n/2}$.
\end{cnj}

\subsection{Corollaries and discussion}

Define the non-Clifford cost of a quantum circuit to be the number of operations outside the Clifford group that it contains.  $T$-count, a metric of this kind, is popular in applications, owing to the dominating cost of the $T/T^\dagger$ gates over the cost of other gates.

\vspace{2mm}\begin{cor}\label{cor:1}
The $T$-count of quantum circuits implementing a reversible function $f(x)$ of $n$ primary inputs in the form of the mapping $(x,0) \mapsto (x,f(x))$ can be upper bounded by the expression $21\sqrt{n}2^{n/2} + o(\sqrt{n}2^{n/2})$. 
\end{cor}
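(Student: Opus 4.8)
The plan is to combine two facts. First, over the standard Clifford$+T$ fault-tolerant gate set a single Toffoli gate is realizable with exactly $7$ $T/T^\dagger$ gates together with Clifford gates (two Hadamards and six CNOTs) \cite{ar:ammr}, so the $T$-count of any NCT circuit is at most $7$ times its Toffoli count, since CNOT, NOT, and Hadamard gates contribute nothing to the $T$-count. Second, the construction underlying Theorem \ref{thm:ub} can be converted into an implementation of the intermediate mapping $(x,0)\mapsto(x,f(x))$ whose Toffoli count I will bound by $3\sqrt{n}2^{n/2}$ in the leading order. Multiplying the two gives the advertised $21\sqrt{n}2^{n/2}+o(\sqrt{n}2^{n/2})$.

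First I would revisit the three-stage construction of Theorem \ref{thm:ub}. The product terms over $A$ and over $B$ are built on fresh ancillae (Lemma \ref{lem:u1}), the primary inputs $x$ appear only as controls, and the $n$ output bits are assembled on a separate register. Because $x$ is preserved and the outputs land on their own wires, EXORing the PPRM combinations into the fresh register $y$ realizes $y\mapsto y\oplus f(x)$; to make the ancillae clean, as the intermediate notion requires, I then run the product-term construction in reverse. The crucial accounting point is which Toffoli gates get duplicated: only the product-term stages (costing $2^a+2^b-a-b-2$ Toffoli gates) are performed twice, to compute and to uncompute, whereas the output-assembly stage (costing $n(2^a-1)$ Toffoli gates, with multiplications by the constant $1$ handled by free CNOTs, and the scratch register holding each ``proper function over $B$'' cleared between multiplications by free CNOTs) is performed once, directly into $y$.

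This yields a Toffoli count of $2(2^a+2^b-a-b-2)+n(2^a-1)$, with leading behaviour $n\,2^{a}+2\cdot 2^{b}$. I would then re-optimize the split $a+b=n$ for this new cost: balancing $n\,2^a$ against $2\cdot 2^b$ gives $2^b=\sqrt{n/2}\,2^{n/2}$ and a continuous optimum of $2\sqrt{2}\,\sqrt{n}2^{n/2}$. Writing the cost as $\sqrt{2}\,\sqrt{n}\,2^{n/2}(2^{\delta}+2^{-\delta})$, where $\delta$ is the rounding offset forced by integrality of $a,b$, the worst case $\delta=\pm\tfrac12$ reproduces exactly the correction ratio $\tfrac{3}{2\sqrt{2}}$ already met in Theorem \ref{thm:ub} (indeed $2^{1/2}+2^{-1/2}=\tfrac{3}{\sqrt2}$). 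Hence the guaranteed Toffoli count is $2\sqrt{2}\cdot\tfrac{3}{2\sqrt{2}}\sqrt{n}2^{n/2}=3\sqrt{n}2^{n/2}+o(\sqrt{n}2^{n/2})$, and multiplying by $7$ closes the argument.

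The main obstacle I anticipate is bookkeeping rather than any conceptual difficulty: one must verify that reversing the product-term stages restores every ancilla to $0$ while leaving both $x$ and the freshly written $y\oplus f(x)$ untouched, and that the integrality correction for the asymmetric cost $n\,2^a+2\cdot 2^b$ genuinely gives the same $\tfrac{3}{2\sqrt2}$ factor as the symmetric cost of Theorem \ref{thm:ub}. A naive alternative---quoting relation IW to double the weak bound of Theorem \ref{thm:ub} directly---would instead give $7\cdot 2\cdot\tfrac{3}{\sqrt2}\sqrt{n}2^{n/2}=21\sqrt{2}\,\sqrt{n}2^{n/2}$, worse by a factor of $\sqrt2$; the improvement to exactly $21$ comes precisely from duplicating only the product-term stage and re-optimizing the split.
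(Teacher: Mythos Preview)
Your proposal is correct and follows essentially the same route as the paper's own proof: only the product-term stages are uncomputed, yielding a Toffoli count of $2(2^a+2^b-n-2)+n(2^a-1)$, the split is re-optimized by equating $n2^a$ with $2\cdot 2^b$ (the paper writes $a=\frac{n+1-\log_2 n}{2}$), the integrality correction $\tfrac{3}{2\sqrt{2}}$ brings the leading term to $3\sqrt{n}2^{n/2}$, and a final multiplication by $7$ gives the stated $T$-count. Your explicit parametrization $\sqrt{2}\sqrt{n}2^{n/2}(2^\delta+2^{-\delta})$ and your remark contrasting this with the cruder $21\sqrt{2}$ bound from relation IW are useful elaborations, but the underlying argument is the same as the paper's.
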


\begin{proof}
The desired construction relies on the Bennett's trick \cite{ar:b}.  The Bennett's trick ensures that all auxiliary bits are properly cleaned and no residual entanglement remains that may prohibit from using the desired reversible circuit within quantum algorithms.  In particular, apply the result of Theorem \ref{thm:ub} to obtain a reversible NCT circuit with $n$ Boolean outputs $(f_1(x),f_2(x),...,f_n(x)) = f(x)$.  This circuit, $C$, relies on $2^a+2^b-n-2+n(2^a-1)$ Toffoli gates and computes functions $f_1,f_2,...,f_n$, product terms over the set $A$, and product terms over the set $B$.  To obtain the desired mapping $(x,0) \mapsto (x,f(x))$, we only need to uncompute product terms over the sets $A$ and $B$ using the inversion of the circuit that was used to compute them.  Such a circuit uses $2^a+2^b-n-2$ Toffoli gates.  The overall gate count is thus $2(2^a+2^b-n-2)+n(2^a-1)$.  Equating $2\cdot 2^b = 2 \cdot 2^{n-a}$ to $n2^a$ allows to obtain favourable asymptotic. This requires the parameter $a$ to take the value $\frac{n+1-\log_2n}{2}$.  The overall Toffoli gate count corrected for the notion that $a$ must be integer is thus $\frac{3}{2\sqrt{2}}\cdot 2\sqrt{2}\sqrt{n} 2^{n/2} + o(\sqrt{n}2^{n/2}) = 3\sqrt{n}2^{n/2} + o(\sqrt{n}2^{n/2})$. Since each Toffoli gate relies on seven $T/T^\dagger$ gates, the overall $T$-count is $21\sqrt{n}2^{n/2} + o(\sqrt{n}2^{n/2})$. 
\end{proof}

\begin{table}[ht]
\centerline{
\begin{tabular}{ll|ll} \hline \hline
n 	& $T$-count & n 	& $T$-count	 	\\ \hline
3   & 36 		& 12 & 4,648  \\
4   & 84		& 13 & 6,643  \\
5   & 175 		& 14 & 10,430 \\
6   & 294 		& 15 & 14,441 \\
7   & 525 		& 16 & 22,036 \\
8   & 812 		& 17 & 30,975 \\
9   & 1,295 	& 18 & 46,186 \\
10  & 2,002 	& 19 & 65,877 \\
11  & 2,989 	& 20 & 96,320\\ \hline
\end{tabular} 
}
\caption{Upper bounds on the $T$-count of an arbitrary reversible $f(x)$ over $n$ variables, implemented as the mapping $(x,0)\mapsto(x,f(x))$ with the use of arbitrary ancillae.} \label{tab:main}
\end{table}

Table \ref{tab:main} reports upper bounds on the number of $T/T^\dagger$ gates in the NCT circuit realizations of reversible $n$-bit functions for small values $n$.  

In our constructions of the upper bounds we relied on the seven $T/T^\dagger$ gate implementation of the Toffoli gate.  However, in the presence of measurements and the ability for classical feedback, the Toffoli gate can be implemented via a circuit with only four $T/T^\dagger$ gates \cite{ar:j}.  This means that the upper bound in Corollary \ref{cor:1} drops down to $12\sqrt{n}2^{n/2} + o(\sqrt{n}2^{n/2})$, and all $T$-counts in Table \ref{tab:main} can be reduced by the factor of $7/4$ ({\em e.g.}, a $15$-bit reversible function would require only at most $8252$ $T/T^\dagger$ gates). 

In the above, we upper bounded the $T$-count cost of the implementations of reversible functions by an expression of the form $O(\sqrt{n}2^{n/2})$, as well as reported a table showing  the $T$-count for small numbers of inputs $n$ (Table \ref{tab:main}).  We will next consider a more realistic circuit cost metric and establish a lower bound on its value to show that the use of the $T$-count may significantly downplay the real cost of circuit implementations.  The lesson here is the $T$-count metric must be used with extra care, or better yet replaced with a metric that does not lead to an abuse of a resource deemed less costly and thereby ignored, such as the $T$-count does with the CNOT gates. 

Per previous constructions, the number of Toffoli gates that suffice to implement an arbitrary $n$-bit reversible function is upper bounded by the expression $\min_{\{a>0,b>0,a+b=n\}}{(2(2^a+2^b-n-2)+n(2^a-1))}$, and thereby the $T$-count is no more than 
$$X:=7\cdot \min_{\{a>0,b>0,a+b=n\}}{(2(2^a+2^b-n-2)+n(2^a-1))},$$
 in the scenario when we are concerned with the potential unwanted entanglement.  Recall that when applying the Bennett's trick to this circuit, the sub-functions $\bigoplus_{k=0..2^b-1, f_i(j,k)=1} P(k)$ need to be computed and multiplied by a proper product the over variable set $A$ only in the first part of the circuit, but are unnecessary in the second part, as they are uncomputed after each use.  Such circuits implementing the reversible functions use no more than a total of $S=2^a+2^b+n+1$ bits: $2^a$ bits contain products over the set $A$ (including primary inputs), plus $2^b$ bits containing products over the set $B$ (including primary inputs), plus $n$ bits where the output values $f_1,f_2,...,f_n$ are constructed, plus $1$ bit to compute/uncompute different sub-functions $\bigoplus_{k=0..2^b-1, f_i(j,k)=1} P(k)$.  Next, establish a lower bound on the quantity $L_{0,1,1}(n,S-n)$, counting the number of CNOT and Toffoli gates in the reversible circuits over $S$ bits and implementing reversible $n$-bit functions.  Considering the quantity $L_{0,1,1}(n,S-n)$ ensures we use same number of ancillary qubits as that used to obtain the number $X$. Applying \cite[Lemma 8]{ar:spmh}, a lower bound is given by the expression 
 $$Y:=\frac{\log_2{G}}{\log_2{b}} = \frac{\log_2(2^n!/2^n)}{\log_2(4S(S-1)(S-2)+4S(S-1))} (= \frac{2^{n+1}}{3} + o(2^n)).$$  

To be able to compare the numeric values of $X$, being the upper bound expressing the $T$-count to $Y$, being the lower bound for the number of CNOT/Toffoli gates, divide $Y$ by the cost of the $T$ gate expressed in terms of the cost of the CNOT, being the cheaper one between the CNOT and the Toffoli.  We have previously established that this number may carry a value of about $50$.  Comparing $X$ to $Y/50$ reveals that for $n=27, a=12$ the latter is already greater than the former.  This means that the $T$-count cost metric may already undervalue the real cost of the circuits when $n$ is as small as $27$.  By the time $n=50$ ($a=23$), the difference between the two grows to a factor of $2662$, meaning that for the numbers this high the $T$-count cost metric can be rather misleading.  The discrepancy furthermore grows very rapidly with $n$---specifically, with the speed $C\frac{2^{n/2}}{\sqrt{n}}$, for some constant $C$.

While the above numbers clearly discourage from the use of the $T$-count circuit metric in scalable designs, we suspect that the real scope of the potential misinformation carried by using the $T$-count may be much larger.  This is because in our calculations we did not account for such resources as the cost of ancilla, or the cost of the long-range CNOT gates, that are downplayed (in fact, ignored) by the $T$-count.  On the other hand, we proved that the discrepancy exists in general, whereas practical quantum computations rely on very specific and well-structured reversible transformations (such as arithmetic circuits, including exponentiation part of Shor's algorithm).  The extent to which the discrepancy can and does manifest itself in practice and over such structured circuits needs to be studied separately.

\section{Summary of the results}
Our study details reversible NCT circuit complexity figures by the gate types, leading to the following list of refined optimal and asymptotically optimal values for the respective counts. 
\begin{itemize}
\item[000.] $\forall g$ $L_{0,0,0}(n,g) = 0$;
\item[001.] $\sqrt{n}2^{n/2} \lesssim L_{0,0,1}(n) \lesssim \frac{3}{\sqrt{2}}\sqrt{n}2^{n/2}$;
\item[010.] $\forall g$ $L_{0,1,0}(n,g) = 0$;
\item[011.] $\frac{n2^n}{3 \log_2{n}} \lesssim L_{0,1,1}(n,1) \lesssim \frac{40n2^n}{\log_2{n}}$;
\item[100.] $\forall g$ $L_{1,0,0}(n,g)=0$;
\item[101.] $\sqrt{n}2^{n/2} \lesssim L_{1,0,1}(n) \lesssim \frac{3}{\sqrt{2}}\sqrt{n}2^{n/2}$;
\item[110.] $L_{1,1,0}(n,1)=1,$ $\forall g>1$ $L_{1,1,0}(n,g)=0$;
\item[111.] $\frac{n2^n}{3 \log_2{n}} \lesssim L_{1,1,1}(n,1) \lesssim \frac{48n2^n}{\log_2{n}}$; 
\end{itemize} 

\section{Conclusion}
In this paper, we studied the complexity function $L_{a,b,c}(n,g)$, detailing reversible NCT circuit costs by the gate types used.  We established asymptotically optimal or optimal counts in every possible scenario.  Of these, some bounds were known from the previous literature.  We upper and lower bounded the multiplicative complexity of reversible circuits, leading to their asymptotic optimality.  We formulated a conjecture stating that $L_{0,0,1}(n) \lesssim \sqrt{n}2^{n/2}$.  Proving this conjecture would establish that the multiplicative complexity of reversible functions is equal to $\sqrt{n}2^{n/2}$ up to lower order additive terms.  We furthermore applied our study to show the limitations on the use of the $T$-count, multiplicative complexity, and Toffoli count metrics in practical designs.  The discrepancy between a real cost and the one provided by the $T$-count/multiplicative complexity/Toffoli count may be as high as $C\frac{2^{n/2}}{\sqrt{n}}$, where $C$ is a constant.  Taking some realistic parameters we estimated that for $n=50$ the $T$-count may misrepresent a real cost of the circuit it is applied to evaluate by a factor of as much as $2662$. 

\section*{Acknowledgements}

I wish to thank Prof. Sergey B. Gashkov from Lomonosov Moscow State University and Dr. Martin R\"{o}tteler from Microsoft Research for their helpful discussions. 

This material was based on work supported by the National Science Foundation, while working at the Foundation. Any opinion, finding, and conclusions or recommendations expressed in this material are those of the author and do not necessarily reflect the views of the National Science Foundation.

\end{document}